\author{\authorblockN{ Derrick Wing Kwan Ng\authorrefmark{1}, Robert Schober\authorrefmark{2}, and Hussein Alnuweiri\authorrefmark{3}
\thanks{\authorrefmark{2}The author is also with the University of British Columbia, Vancouver, Canada. This research was supported by the Qatar National Research Fund (QNRF), under project NPRP 5-401-2-161. }}
Institute for Digital Communications, Friedrich-Alexander-University Erlangen-N\"urnberg (FAU), Germany\authorrefmark{1}\authorrefmark{2}\\
Texas A\&M University at Qatar,  Qatar\authorrefmark{3}\\
Email: kwan@lnt.de,  schober@lnt.de, hussein.alnuweiri@qatar.tamu.edu \vspace*{-0.3cm}

}
\title{\vspace*{-0.5cm}Power Efficient MISO Beamforming for Secure  Layered Transmission}
\date{\thistime,\,\today}
\newtheorem{proposition}{Proposition}
\DeclareMathOperator{\Tr}{Tr}
\DeclareMathOperator{\Rank}{Rank}
\newtheorem{Remark}{Remark}
\DeclareMathOperator{\maxo}{maximize}
\DeclareMathOperator{\mino}{minimize}
\newcommand{\textoverline}[1]{$\overline{\mbox{#1}}$}
\newcommand{\abs}[1]{\lvert#1\rvert}
\newcommand{\norm}[1]{\lVert#1\rVert}
\newcolumntype{L}{>{\centering\arraybackslash}m{3cm}}
\begin{document}

\maketitle

\begin{abstract}
This paper studies secure layered video transmission in a multiuser multiple-input single-output (MISO) beamforming downlink communication system.  The power
allocation algorithm design is formulated as a non-convex
optimization problem for  minimizing the total transmit power while
 guaranteeing a minimum  received  signal-to-interference-plus-noise ratio (SINR) at the desired receiver.
 In particular, the proposed problem formulation takes into account the  self-protecting architecture of layered
 transmission and artificial noise generation  to prevent potential information eavesdropping. A semi-definite programming (SDP) relaxation  based power allocation algorithm is proposed to obtain an upper bound solution. A sufficient condition for the global optimal solution is examined to reveal the tightness of the upper bound solution. Subsequently, two suboptimal power allocation schemes with low computational complexity are proposed for enabling secure layered video transmission. Simulation results demonstrate  significant transmit power savings achieved by the proposed algorithms
 and layered transmission compared to the baseline schemes.

\end{abstract}
\renewcommand{\baselinestretch}{0.94}
\large\normalsize

\section{Introduction}
\label{sect1}
The explosive growth in  high data  rate real time multimedia services in wireless communication networks  has
led to a heavy demand for energy and bandwidth. Multiple-input multiple-output (MIMO) technology has emerged as one of the most prominent solutions in fulfilling this challenging demand, due to its inherited  extra degrees of freedom for resource allocation \cite{book:david_wirelss_com}--\nocite{JR:TWC_large_antennas,JR:MIMO_layered,JR:MIMO_layered2}\cite{CN:MISO_layer}. However, the hardware complexity of multiple antenna receivers limits the deployment of such technology in practice, especially for portable
devices. As an alternative, multiuser MIMO  has been proposed   where a
transmitter equipped with multiple  antennas services multiple
single-antenna users.

Furthermore, for video streaming, layered transmission  has been implemented in some existing video standards such as  H.264/Moving Picture Experts Group (MPEG)-4 scalable video coding (SVC)  \cite{JR:Video_layers,JR:Video_layers2}. Specifically, a video signal is encoded into a hierarchy of  multiple layers with unequal importance, namely one
 base layer and several enhancement layers. The base layer contains the essential information of the video with minimum video quality. The information embedded in each enhancement layer is used to successively  refine the description of the pervious layers.  The structure of layered transmission  facilities the implementation of unequal error protection. In fact,  the transmitter can achieve a better resource utilization  by  allocating different amount of
powers  to different information layers according to their importance to the video quality.

Also, since the broadcast nature of wireless communication channels makes them vulnerable to eavesdropping, there is an emerging need for guaranteeing secure wireless video communication. For instance, misbehaving legitimate users of a communication system may attempt to use the   high definition video  service without paying by overhearing the video signal. Although cryptographic encryption algorithms are commonly implemented in the application layer for providing secure communication, the associated  security key distribution and management can be problematic or infeasible in wireless networks.  As a result,  physical (PHY) layer
security \cite{Report:Wire_tap}--\nocite{CN:Multicast,JR:Artifical_Noise1,JR:Kwan_physical_layer}\cite{JR:secrecy_beamforming} has been proposed as a complement to the traditional methods for improving wireless transmission security. The merit of PHY layer security lies in the guaranteed  perfect secrecy of communication by exploiting the physical characteristics of the wireless  communication channel. In his seminal work on PHY layer security \cite{Report:Wire_tap}, Wyner showed
 that a non-zero secrecy capacity,
defined as the maximum transmission rate at which an eavesdropper
is unable to extract any information from the received signal, can be achieved if
the desired receiver enjoys better channel conditions than the
eavesdropper. A considerable amount of research \cite{CN:Multicast}--\cite{JR:Kwan_physical_layer} has been devoted
to exploiting multiple antennas for  providing  communication secrecy. In \cite{CN:Multicast}, transmit beamforming was proposed to maximize the secrecy capacity in a multiple-input single-output (MISO) communication system.
 In \cite{JR:Artifical_Noise1} and  \cite{JR:Kwan_physical_layer}, artificial noise generation
was proposed for multiple antenna communication systems to cripple the interception capabilities  of eavesdroppers. In particular, \cite{JR:Artifical_Noise1} and  \cite{JR:Kwan_physical_layer} focused on the
maximization of the ergodic secrecy capacity and the outage secrecy capacity for different system settings, respectively.
 In \cite{JR:secrecy_beamforming}, a power allocation algorithm was proposed to maximize the system energy efficiency  while providing delay-constrained secure communication service. However, the results of \cite{CN:Multicast}--\cite{JR:secrecy_beamforming}
are based on the assumption of  single layer transmission  and may not be applicable to multimedia video transmission. Besides, the
layered information architecture  of video signals has a
\emph{self-protecting structure} which provides a certain robustness against  eavesdropping. However, exploiting the  layered transmission for facilitating PHY layer security in video communication  has not been considered in the literature \cite{JR:MIMO_layered}--\cite{JR:Video_layers2}.

Motivated by the aforementioned observations, we
formulate the power allocation algorithm design for secure layered video transmission in multiuser MISO beamforming systems
 as a non-convex optimization problem. We propose a semi-definite programming (SDP) approach for designing a power allocation algorithm which obtains an upper bound solution for the considered problem. Besides, we use the upper bound solution  as a building block for the design of two suboptimal beamforming schemes with low computational complexity and near optimal performance.

 \begin{figure*}[t]
 \centering \vspace*{-3mm}
\includegraphics[width=4.5in]{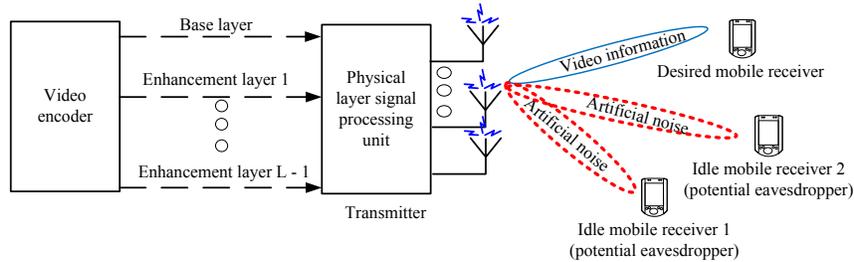} \vspace*{-1mm}
 \caption{Layered transmission system model for $K=3$ mobile video receivers.} \label{fig:system_model} \vspace*{-4mm}
\end{figure*}
%\emph{Notations:}
% $(\mathbf{X})^H$, $\Rank(\mathbf{X})$, $\Tr(\mathbf{X})$ denote  the conjugate transpose, the
%rank, and the , respectively.
%$\mathbf{X}\succeq \mathbf{0}$ ($\mathbf{X}\succ \mathbf{0}$) indicates that
%$\mathbf{X}$ is a positive semi-definite (definite) matrix.
%$\mathbb{H}^N$ represents the set of all $N$-by-$N$ complex Hermitian matrices. $\mathbb{H}^N$ represents the set of all $N$-by-$N$ complex Hermitian matrices $\Big[\mathbf{X}\Big]_{a,b}$ extracts the $(a,b)$-th element of matrix $\mathbf{X}$. Matrix $\mathbf{I}_{N}$
%denotes an $N\times N$ identity matrix.   $\norm{\cdot}$ and $\abs{\cdot}$ denote the
%Euclidean norm of a matrix/vector and the absolute value of a complex scalar, respectively.
%The distribution of a circularly symmetric complex Gaussian (CSCG)
%vector with mean vector $\mathbf{x}$ and covariance matrix
%$\mathbf{\Sigma}$  is denoted by ${\cal
%CN}(\mathbf{x},\mathbf{\Sigma})$, and $\sim$ means ``distributed
%as". $[x]^+=\max\{0,x\}$.

\section{System Model}
\label{sect:OFDMA_AF_network_model}
In this section,  we  present the adopted models for  multiuser downlink communication and  layered video encoding.

%%%%%%%%%%%%%%%%%%%%%%%%%%%%%%%
\subsection{Channel  Model}
%%%%%%%%%%%%%video subscriber multicast beamforming%%%%%%%%%%%%%%%%%%%
A downlink communication system is considered which consists of a transmitter and $K$  receivers. The transmitter is
equipped with $N_\mathrm{T}$ transmit antennas while the receivers are single antenna devices, cf. Figure \ref{fig:system_model}.
The transmitter conveys   video information to a given video service  subscriber (receiver) while the remaining $K-1$ receivers are idle.
However, the transmitted video signals  may be overheard by the idle receivers. In practice, it is possible that the
idle receivers are malicious and eavesdrop the video information of the other subscribers, e.g., a paid multimedia video service, by overhearing the  video
 signal transmitted by the transmitter. In other words, the idle receivers are potential eavesdroppers which should be taken into account when delivering secure video information to the desired user.
 We focus on a frequency flat slow time varying fading channel in a time division duplexing (TDD) system. The
downlink channel gains of all receivers can be accurately estimated based on the uplink pilot sequences contained in the handshaking signals via channel reciprocity. The downlink
received signals at the desired receiver and the $K-1$ idle receivers are given by, respectively,
\begin{eqnarray}
y&=&\mathbf{h}^{H} \mathbf{x}+z_\mathrm{s},\\
y_{\mathrm{E}_k}&=&\mathbf{g}_{k}^{H} \mathbf{x} +z_{\mathrm{s},k},\,\,  \forall k=\{1,\ldots,K-1\},
\end{eqnarray}
where $\mathbf{x}\in\mathbb{C}^{ N_\mathrm{T} \times 1}$ denotes the transmitted symbol vector and $\mathbb{C}^{N\times M}$ denotes the space of $N\times M$ matrices with complex entries.
$\mathbf{h}^{H}\in\mathbb{C}^{1\times N_\mathrm{T}}$ is the channel
vector between the transmitter and the desired receiver
  and
$\mathbf{g}_{k}^{H} \in\mathbb{C}^{1\times N_\mathrm{T}}$ is the channel
vector between the transmitter and idle receiver (potential eavesdropper) $k$. $(\cdot)^H$ denotes the conjugate transpose of  an input matrix.  Variables
 $\mathbf{h}$ and $\mathbf{g}_{k}$ capture the effects of the multipath fading and path loss
of the associated channels.
  The noise terms $z_\mathrm{s}$ and $z_{\mathrm{s},k}$ include the joint effect of
  thermal noise and signal processing noise in the desired receiver and idle receiver $k$, respectively. They are modeled as  additive white Gaussian noises   with zero mean and variance\footnote{We assume that the signal processing and thermal noise characteristics of all receivers are identical due to a similar hardware architecture.} $\sigma_{\mathrm{s}}^2$,  cf. Figure \ref{fig:system_model}.

\subsection{Video Encoding and Artificial Noise Generation}
We assume that the video source is encoded  into $L$ layers via scalable video coding. Without loss of generality, the
video information can be represented as $\mathbf{S}=\big[s_1,s_2,\ldots,s_L\big],
s_i\in\mathbb{C}, i\in\{1,\ldots,L\}$, where $s_i$ denotes the video information of layer $i$. These $L$ layers consists of one
base layer, i.e., $s_1$, which
can be  decoded independently without utilizing the information from other layers. Specifically, the base layer data includes the
 most essential information of the video and can guarantee a minimum quality of service (QoS).
  The remaining $L-1$ layers, i.e., $s_2,\ldots,s_L$, are  enhancement layers which can be used to successively refine the previous layers. In other words, the enhancement layers cannot be decoded independently;  if the decoding of a layer fails, the information embedded in the following enhancement layers is lost since they are no longer decodable.
In this paper, we consider a fixed rate video source encoder and the data rate of each layer is fixed. This implementation has been supported by
 some standards such as H.264/MPEG-4 SVC  \cite{JR:Video_layers,JR:Video_layers2}.

On the other hand, for guaranteeing secure video transmission
 to the desired receiver, an artificial noise
signal is transmitted along with the information signal to degrade the channels between
the transmitter and the idle receivers (potential eavesdroppers). The transmitter constructs a transmit symbol vector $\mathbf{x}$ as
\begin{eqnarray}
\mathbf{x}=\underbrace{\sum_{i=1}^L \mathbf{w}_i s_i}_{\mbox{desired $L$-layer video signals}}+\underbrace{\mathbf{v}}_{\mbox{artificial noise}},
\end{eqnarray}
where $\mathbf{w}_i\in\mathbb{C}^{N_\mathrm{T}\times 1}$ is the   beamforming vector for the video information signal in layer $i$ dedicated to the desired receiver.
We assume without loss of generally that  ${\cal E}\{\abs{s_i}^2\}=1,\forall i,$ and superposition coding is used to
superimpose the $L$ layers of video information, where ${\cal E}\{\cdot\}$  represents statistical
expectation. $\mathbf{v}\in\mathbb{C}^{N_\mathrm{T}\times 1}$ is the artificial noise vector generated by the transmitter to combat the potential eavesdroppers. In particular,  $\mathbf{v}$ is
modeled as a circularly symmetric complex Gaussian random vector represented as
\begin{eqnarray}
\mathbf{v}\sim {\cal CN}(\mathbf{0}, \mathbf{V}),
\end{eqnarray}
where $\mathbf{V}\in \mathbb{H}^{N_\mathrm{T}},\mathbf{V}\succeq \mathbf{0}$  denotes the covariance matrix of the artificial noise.  Here, $\mathbb{H}^N$   represents the set of all $N$-by-$N$ complex Hermitian matrices and $\mathbf{V}\succeq \mathbf{0}$ indicates that $ \mathbf{V}$ is a  positive semi-definite Hermitian matrix.

\section{Power Allocation Algorithm Design}\label{sect:forumlation}
In this section, we define the adopted quality of service (QoS)
measure for secure communication systems and formulate the power allocation algorithm design as an optimization problem.
We define the following variables for the sake of notational simplicity:
$\mathbf{H}=\mathbf{h}\mathbf{h}^H$ and $\mathbf{G}_k=\mathbf{g}_k\mathbf{g}_k^H, k\in\{1,\ldots,K-1\}$.

\subsection{Channel Capacity}
\label{subsect:Instaneous_Mutual_information}
%%%%%%%%%%%%%%%%%%%%%%%%%%%%%%%%%%%%%%%%%%%%%%%%%%%

With successive interference cancellation\footnote{ The corner points of the dominant face
of the multiple-access capacity region can be achieved by superposition coding and a successive interference cancellation receiver  with low computational complexity \cite{book:david_wirelss_com}.
}, the receivers first decode and cancel the lower layers before decoding the higher layers. Besides, the yet to be decoded higher layers are treated as Gaussian noise \cite{JR:interference_model}. Assuming perfect channel state information (CSI) at the
receiver, the capacity (bit/s/Hz) between the transmitter and the desired mobile video receiver of layer $i$
is given by
\begin{eqnarray}\notag
C_i&=&\log_2\Big(1+\Gamma_i\Big)\,\,\,\,
\mbox{and}\,\,\\
\Gamma_i&=&\frac{\abs{\mathbf{h}^H\mathbf{w}_i}^2}
{\sum_{j=i+1}^L\abs{\mathbf{h}^H\mathbf{w}_j}^2+\Tr(\mathbf{H}\mathbf{V})+\sigma_\mathrm{s}^2}\label{eqn:cap} ,
\end{eqnarray}
where $\Gamma_i$ is the received signal-to-interference-plus-noise ratio (SINR) of layer $i$ at the desired receiver.  $\Tr(\cdot)$ and  $\abs{\cdot}$ denote the
 trace of a matrix and the absolute value of a complex scalar, respectively.
On the other hand, the channel capacity between the transmitter and idle receiver (potential eavesdropper) $k$  of layer $i$  is given  by
\begin{eqnarray}\notag
C_{\mathrm{E}_{i,k}}&=&\log_2\Big(1+\Gamma_{\mathrm{E}_{i,k}}\Big)\,\,\,\,
\mbox{and}\,\,\\
\label{eqn:eavesdropper-SINR}
\Gamma_{\mathrm{E}_{i,k}}&=&\frac{\abs{\mathbf{g}_k^H\mathbf{w}_i}^2}
{\sum_{j=i+1}^L \abs{\mathbf{g}_k^H\mathbf{w}_j}^2+\Tr(\mathbf{G}_k\mathbf{V})+\sigma_{\mathrm{s}}^2} , \label{eqn:cap-eavesdropper} \end{eqnarray}
where $\Gamma_{\mathrm{E}_{i,k}}$ is the received SINR at idle receiver $k$.   It can be observed from (\ref{eqn:eavesdropper-SINR}) that layered transmission has a \emph{self-protecting structure}.  Specifically, the higher layer information via the first term in the denominator of (\ref{eqn:eavesdropper-SINR}) has the same effect as the artificial noise signal  $\mathbf{v}$ in protecting the important information encoded in the
lower layers of the video signal. It is expected that by carefully optimizing the beamforming vectors of the higher information layers, a certain level of communication security can be achieved in the low layers.  In the literature, one of the performance metrics for quantifying the notion of communication security in the PHY layer is the secrecy capacity. Specifically,  the maximum secrecy capacity between the transmitter
and the desired receiver on layer $i$ is given by
\begin{eqnarray}\label{eqn:secrecy_cap}
C_{\mathrm{sec}_i}=\Big[C_i - \underset{k\in\{1,\ldots,K-1\}}{\max} C_{\mathrm{E}_{i,k}}\Big]^+,
\end{eqnarray}
where  $[x]^+=\max\{0,x\}$.  $C_{\mathrm{sec}_i}$  quantifies
the maximum achievable data rate at which a transmitter can reliably send a secret
information on layer $i$ to the desired receiver such that the eavesdropper is unable to decode the received signal \cite{Report:Wire_tap}.
%%%%%%%%%%%%%%%%%%%%%%%%%%%%%%%%%%%%%%%%
\subsection{Optimization Problem Formulation}
\label{sect:cross-Layer_formulation}
%%%%%%%%%%%%%%%%%%%%%%%%%%%%%%%%%%%%%%%%%%%%%%%%%%%%%%%
The optimal beamforming vector of video information layer $i$, ${\mathbf w}^*_i$, and the optimal artificial noise covariance matrix, ${\mathbf  V}^*$, for minimizing the total radiated power  can be
obtained by solving
\begin{eqnarray}
\label{eqn:cross-layer}&&\hspace*{5mm} \underset{\mathbf{V}\in \mathbb{H}^{N_\mathrm{T}},\mathbf{w}_i
}{\mino} \,\, \sum_{i=1}^L\norm{\mathbf{w}_i}^2+\Tr(\mathbf{V})\nonumber\\
\notag \mbox{s.t.} \hspace*{-1mm}&&\hspace*{-5mm}\mbox{C1: }\notag\frac{\abs{\mathbf{h}^H\mathbf{w}_i}^2}{\sum_{j=i+1}^L \abs{\mathbf{h}^H\mathbf{w}_j}^2+\Tr(\mathbf{H}
\mathbf{V})+\sigma_{\mathrm{s}}^2} \ge \Gamma_{\mathrm{req}_i}, \forall i, \\
\hspace*{-1mm}&&\hspace*{-5mm}\mbox{C2: }\notag\frac{\abs{\mathbf{g}_k^H\mathbf{w}_1}^2}
{\sum_{j=2}^L \abs{\mathbf{g}_k^H\mathbf{w}_j}^2+\Tr(\mathbf{G}_k\mathbf{V})+\sigma_{\mathrm{s}}^2} \le \Gamma_{\mathrm{tol}_k},\forall k,\\
\hspace*{-1mm}&&\hspace*{-5mm}\mbox{C3: }\notag \notag \big[\hspace*{-0.5mm}\sum_{i=1}^{L}\mathbf{w}_i\mathbf{w}^H_i\hspace*{-0.5mm}\big]_{n,n}  \hspace*{-1.5mm}+\hspace*{-1mm}\big[\hspace*{-0.5mm}\mathbf{V}\hspace*{-0.5mm}\big]_{n,n}\hspace*{-1mm}\le P_{\max_n}, \forall n\in\{1,\ldots,N_{\mathrm{T}}\}, \\
\hspace*{-1mm}&&\hspace*{-5mm}\mbox{C4:}\,\, \mathbf{V}\succeq \mathbf{0},
\end{eqnarray}
where $\norm{\cdot}$ in the objective function denotes the Euclidean vector norm. $\Gamma_{\mathrm{req}_i}$ in C1 is the minimum required SINR  for decoding layer $i$ at the desired receiver.
$\Gamma_{\mathrm{tol}_k}$ in C2 denotes the maximum tolerable SINR at idle receiver (potential eavesdropper) $k$ for
decoding the first layer. We note that since layered coding is employed for  encoding the video information,    it is sufficient to protect the first layer of the video for ensuring secure video delivery. In practice,
the transmitter  sets a  sufficiently small value of $\Gamma_{\mathrm{tol}_k}$ such that $\Gamma_{\mathrm{req}_1}\gg\Gamma_{\mathrm{tol}_k}$ holds which  provides an  adequate protection for the first layer. We note that in this paper, we do not directly maximize the secrecy capacity of video delivery. Nevertheless, the proposed problem formulation in (\ref{eqn:cross-layer})
is able to guarantee  a minimum secrecy capacity for layer 1, i.e., $C_{\mathrm{sec}_1}\ge
\log_2(1+\Gamma_{\mathrm{req}_1})-\log_2(1+\underset{k\in\{1,\ldots,K-1\}}{\max}\,\Gamma_{\mathrm{tol}_k})$.
 In C3, $\big[\cdot\big]_{a,b}$ extracts the $(a,b)$-th element of the input matrix. Specifically, C3 limits the maximum transmit power of antenna $n$ to $P_{\max_n}$. C4
 and $\mathbf{V}\in \mathbb{H}^{N_\mathrm{T}}$  are imposed such that $\mathbf{V}$ is a positive semi-definite matrix  to satisfy the physical requirements on covariance matrices.

%%%%%%%%%%%%%%%%%%%%%%%%%%%%%%%%%%%%%%%%%%%%%%%%%%%%%%% %%%%%%%%%%%%%%%
\section{Solution of the Optimization Problem} \label{sect:solution}
The optimization problem in (\ref{eqn:cross-layer}) is  non-convex due to constraints C1 and C2. In some cases, an exhaustive search is required to obtain the solution of non-convex problems. In order to strike a balance between the optimality of the solution  and the computational complexity of the considered problem, we first recast the problem as a
convex optimization problem by SDP relaxation and obtain a performance upper bound. Then, we propose two computational efficient suboptimal power allocation schemes.
%%%%%%%%%%%%%%%%%%%%%%%%%%%%%%%%%%%%%%%%%%%%%%%%%%%%%%%%%%%%%%%%%%%%%%%%%%%%%%%
\subsection{Semi-definite Programming Relaxation} \label{sect:solution_dual_decomposition}
%%%%%%%%%%%%%%%%%%%%%%%%%%%%%%%%%%%%%%%%%%%%%%%%%%%%%%%%%%%%%%%%%%%%%%%%%%%%%%%
For facilitating the SDP relaxation, we define $\mathbf{W}_i=\mathbf{w}_i\mathbf{w}^H_i$ and rewrite problem (\ref{eqn:cross-layer}) in terms of $\mathbf{W}_i$ as
\begin{eqnarray}
\label{eqn:SDP}&&\hspace*{-5mm} \underset{\mathbf{V}\in \mathbb{H}^{N_\mathrm{T}},\mathbf{W}_i\in \mathbb{H}^{N_\mathrm{T}}
}{\mino}\,\, \sum_{i=1}^L \Tr(\mathbf{W}_i)+\Tr(\mathbf{V})\nonumber\\
\notag \mbox{s.t.} \hspace*{-1mm}&&\hspace*{-5mm}\mbox{C1: }\notag\frac{\Tr(\mathbf{H}\mathbf{W}_i)}{\sum_{j=i+1}^L \Tr(\mathbf{H}\mathbf{W}_j)+\Tr(\mathbf{H}
\mathbf{V})+\sigma_{\mathrm{s}}^2} \ge \Gamma_{\mathrm{req}_i}, \forall i, \\
\hspace*{-1mm}&&\hspace*{-5mm}\mbox{C2: }\notag\frac{\Tr(\mathbf{G}_k\mathbf{W}_1)}
{\sum_{j=2}^L \Tr(\mathbf{G}_k\mathbf{W}_j)+\Tr(\mathbf{G}_k\mathbf{V})+\sigma_{\mathrm{s}}^2} \le \Gamma_{\mathrm{tol}_k},\forall k,\\
\hspace*{-1mm}&&\hspace*{-5mm}\mbox{C3: }\notag \notag \Tr\Big(\mathbf{\Psi}_n \big(\mathbf{V}+\sum_{i=1}^L\mathbf{W}_i \big)\Big)\le P_{\max_n}, \forall n, \\
\hspace*{-1mm}&&\hspace*{-5mm}\mbox{C4:}\,\,\mathbf{V}\succeq \mathbf{0},\quad\mbox{C5:}\,\,   \mathbf{W}_i\succeq \mathbf{0}, \forall i, \notag\\
\hspace*{-1mm}&&\hspace*{-5mm}\mbox{C6:}\,\,  \Rank(\mathbf{W}_i)=1, \forall i,
\end{eqnarray}
where $\Rank(\cdot)$ in C6 denotes the rank of the input matrix. $\mathbf{W}_i\succeq \mathbf{0},\forall i\in\{1,\ldots,L\}$,
 $\mathbf{W}_i\in \mathbb{H}^{N_\mathrm{T}},\forall i$, and $\Rank(\mathbf{W}_i)=1,\forall i,$ in (\ref{eqn:SDP}) are imposed
  to guarantee that $\mathbf{W}_i=\mathbf{w}_i\mathbf{w}^H_i$. We note that in constraint C3,
   $\mathbf{\Psi}_n=\mathbf{e}_n\mathbf{e}_n^H$ and $\mathbf{e}_n$ is the
   $n$-th unit vector of length $N_{\mathrm{T}}$ and $\big[\mathbf{e}_n\big]_{t,1}=0,\forall t\ne n$.  The transformed problem above is still non-convex due to the
  rank constraint in C6. However, if we relax this constraint (remove it from the problem formulation), the transformed problem becomes a convex SDP and can be solved efficiently by off-the-shelf numerical solvers such as SeDuMi \cite{JR:SeDumi}. It is known that if the obtained solution $\mathbf{W}_i$ for the relaxed problem admits a rank-one matrix $\forall i$, then it is the optimal solution of the original problem in (\ref{eqn:SDP}). Yet, the proposed constraint relaxation may not be tight, i.e., a rank-one solution may not exist,  and, in this case, the result of the relaxed problem serves as a performance upper bound for the original problem, since a larger feasible solution set is considered in the relaxed problem. In the following, we provide  a sufficient condition for $\Rank(\mathbf{W}_i)=1,\forall i,$ of the relaxed problem and exploit it for the design of two suboptimal power allocation schemes.

 \subsection{Optimality Conditions}
In this subsection, the tightness of the proposed  SDP relaxation is investigated by studying   the Karush-Kuhn-Tucker (KKT) conditions and  the dual problem of the relaxed version of problem (\ref{eqn:SDP}). To this end, we first need
the Lagrangian function  of  (\ref{eqn:SDP}) which is given by
\begin{eqnarray}\hspace*{-2mm}&&\notag{\cal
L}(\mathbf{W}_i,\mathbf{V},\boldsymbol \lambda,\boldsymbol{\beta},\boldsymbol{\delta},\mathbf{Y}_i,\mathbf{Z})\\
\notag\hspace*{-5mm}&=&\hspace*{-3mm} \sum_{i=1}^L \Tr(\mathbf{W}_i)+\Tr(\mathbf{V})-\sum_{i=1}^L\Tr(\mathbf{Y}_i\mathbf{W}_i)-\Tr(\mathbf{ZV}) \\
\notag\hspace*{-5mm}&+&\hspace*{-3mm}\sum_{i=1}^L
\lambda_i\Big(\frac{-\Tr(\mathbf{H}\mathbf{W}_i)}{\Gamma_{\mathrm{req}_i}}+\sum_{j=i+1}^L \Tr(\mathbf{H}\mathbf{W}_j)+\Tr(\mathbf{H}\mathbf{V})+\sigma_{\mathrm{s}}^2\Big)\\
\notag\hspace*{-5mm}&+&\hspace*{-3mm} \sum_{k=1}^{K-1} \beta_k \Big(\frac{\Tr(\mathbf{G}_k\mathbf{W}_1)}{\Gamma_{\mathrm{tol}_k}}\hspace*{-1mm}-\hspace*{-1mm}\sum_{j=2}^L \Tr(\mathbf{G}_k\mathbf{W}_j)-\Tr(\mathbf{G}_k\mathbf{V})-\sigma_{\mathrm{s}}^2\Big)\\
\hspace*{-5mm}&+&\hspace*{-3mm}\sum_{n=1}^{N_{\mathrm{T}}}
 \delta_n \Big(\Tr\Big(\mathbf{\Psi}_n \big(\mathbf{V}+\sum_{i=1}^L\mathbf{W}_i \big)\Big)- P_{\max_n}\Big).
\label{eqn:Lagrangian}
\end{eqnarray}
Here, $\boldsymbol\lambda$, with elements $\lambda_i\ge 0,\forall i\in\{1,\ldots,L\}$, is the Lagrange multiplier vector associated with  the minimum required SINR for decoding layer $i$ for the desired receiver in C1. $\boldsymbol \beta$, with elements $\beta_k\ge 0,\forall k\in\{1,\ldots,K-1\}$, is
the vector of Lagrange multipliers for the maximum tolerable SINRs of the potential eavesdroppers in C2.   $\boldsymbol \delta$,  with elements $\delta_{n}\ge 0,\forall n\in\{1,\ldots,N_{\mathrm{T}}\}$,  is the Lagrange multiplier vector
for the per-antenna maximum transmit power in C3.  Matrices $\mathbf{Z},\mathbf{Y}_i\succeq \mathbf{0}$ are the Lagrange multipliers for the positive semi-definite constraints on matrices $\mathbf{V}$ and $\mathbf{W}_i$  in C4 and C5, respectively.  The dual problem for the SDP relaxed problem is given by
\begin{eqnarray}\label{eqn:dual}
\underset{ \underset{\mathbf{Y}_i,\mathbf{Z}\succeq \mathbf{0}}{\boldsymbol \lambda,\boldsymbol{\beta},\boldsymbol{\delta}\succeq \mathbf{0}}}{\maxo}\ \underset{{\mathbf{W}_i,\mathbf{V}\in \mathbb{H}^{N_\mathrm{T}}}}{\mino}\,\,{\cal
L}(\mathbf{W}_i,\mathbf{V},\boldsymbol \lambda,\boldsymbol{\beta},\boldsymbol{\delta},\mathbf{Y}_i,\mathbf{Z}).\label{eqn:master_problem}
\end{eqnarray}

For facilitating the presentation in the sequel, we define $\mathbf{W}_l^*, \mathbf{V}^*$ as the optimal solution of the relaxed version of problem  (\ref{eqn:SDP}).
In the following proposition, we provide a sufficient condition\footnote{We found by simulations that the SDP relaxation can admit a rank-one solution   in some instances even if the sufficient condition stated in Proposition 1 is not satisfied.} for a rank-one matrix solution for the relaxed version of problem (\ref{eqn:SDP}).

\begin{proposition}For $\Gamma_{\mathrm{req}_i}>0,\forall i,$ in the relaxed version  of problem (\ref{eqn:SDP}), $\Rank(\mathbf{W}_1^*)=1$ always holds.  On the other hand,  $\Rank(\mathbf{W}_j^*)=1,\forall j\in\{2,\ldots,L\},$ holds when $\beta_k=0, \forall k\in\{1,\ldots,K-1\}$.
\end{proposition}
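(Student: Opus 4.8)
The plan is to derive the claim from the Karush--Kuhn--Tucker (KKT) conditions of the relaxed version of problem (\ref{eqn:SDP}). Provided the relaxed problem is feasible, Slater's condition holds, strong duality applies, and the optimal primal point $\mathbf{W}_i^*,\mathbf{V}^*$ together with optimal dual variables $\boldsymbol\lambda,\boldsymbol\beta,\boldsymbol\delta,\mathbf{Y}_i,\mathbf{Z}$ satisfies the KKT conditions. The two conditions that matter here are the stationarity condition $\nabla_{\mathbf{W}_i}{\cal L}=\mathbf{0}$ and the complementary slackness condition $\mathbf{Y}_i\mathbf{W}_i^*=\mathbf{0}$ for each $i$ (the latter because $\mathbf{Y}_i,\mathbf{W}_i^*\succeq\mathbf{0}$ and $\Tr(\mathbf{Y}_i\mathbf{W}_i^*)=0$). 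Differentiating ${\cal L}$ in (\ref{eqn:Lagrangian}) with respect to $\mathbf{W}_1$ gives $\mathbf{Y}_1=\mathbf{A}-\frac{\lambda_1}{\Gamma_{\mathrm{req}_1}}\mathbf{H}$, where $\mathbf{A}=\mathbf{I}+\sum_{n=1}^{N_\mathrm{T}}\delta_n\mathbf{\Psi}_n+\sum_{k=1}^{K-1}\frac{\beta_k}{\Gamma_{\mathrm{tol}_k}}\mathbf{G}_k$ and $\mathbf{I}$ is the identity matrix; differentiating with respect to $\mathbf{W}_j$ for $j\ge 2$ gives $\mathbf{Y}_j=\mathbf{B}_j-\sum_{k=1}^{K-1}\beta_k\mathbf{G}_k$, where $\mathbf{B}_j=\mathbf{I}+\sum_{n=1}^{N_\mathrm{T}}\delta_n\mathbf{\Psi}_n+\big(\sum_{i=1}^{j-1}\lambda_i-\frac{\lambda_j}{\Gamma_{\mathrm{req}_j}}\big)\mathbf{H}$. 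The opposite signs with which the eavesdropper multipliers $\beta_k$ enter $\mathbf{Y}_1$ (positively, since $\mathbf{W}_1$ is the victim layer in C2) and $\mathbf{Y}_j$ for $j\ge 2$ (negatively, since the higher layers appear as interference in C2) are exactly what produces the dichotomy in the proposition.

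For layer $1$: since $\sum_n\delta_n\mathbf{\Psi}_n$ is diagonal with nonnegative entries and each $\mathbf{G}_k=\mathbf{g}_k\mathbf{g}_k^H\succeq\mathbf{0}$, we have $\mathbf{A}\succeq\mathbf{I}\succ\mathbf{0}$, so $\mathbf{A}$ has full rank $N_\mathrm{T}$. Because $\mathbf{H}=\mathbf{h}\mathbf{h}^H$ has rank at most one, subadditivity of rank gives $\Rank(\mathbf{Y}_1)\ge\Rank(\mathbf{A})-\Rank\big(\frac{\lambda_1}{\Gamma_{\mathrm{req}_1}}\mathbf{H}\big)\ge N_\mathrm{T}-1$. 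Then $\mathbf{Y}_1\mathbf{W}_1^*=\mathbf{0}$ forces the columns of $\mathbf{W}_1^*$ into the (at most one-dimensional) null space of $\mathbf{Y}_1$, so $\Rank(\mathbf{W}_1^*)\le 1$. Finally, C1 with $i=1$ yields $\Tr(\mathbf{H}\mathbf{W}_1^*)\ge\Gamma_{\mathrm{req}_1}\sigma_\mathrm{s}^2>0$ since $\Gamma_{\mathrm{req}_1}>0$ and $\sigma_\mathrm{s}^2>0$, hence $\mathbf{W}_1^*\ne\mathbf{0}$ and therefore $\Rank(\mathbf{W}_1^*)=1$. No assumption on $\boldsymbol\beta$ enters this part.

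For layers $j\in\{2,\ldots,L\}$: imposing $\beta_k=0$ for all $k$ removes the negative semi-definite term $-\sum_k\beta_k\mathbf{G}_k$, leaving $\mathbf{Y}_j=\mathbf{B}_j=\mathbf{I}+\sum_n\delta_n\mathbf{\Psi}_n+c_j\mathbf{H}$ with the scalar $c_j=\sum_{i=1}^{j-1}\lambda_i-\frac{\lambda_j}{\Gamma_{\mathrm{req}_j}}$. The matrix $\mathbf{I}+\sum_n\delta_n\mathbf{\Psi}_n$ is again positive definite of full rank and $c_j\mathbf{H}$ has rank at most one, so $\Rank(\mathbf{Y}_j)\ge N_\mathrm{T}-1$; with $\mathbf{Y}_j\mathbf{W}_j^*=\mathbf{0}$ this gives $\Rank(\mathbf{W}_j^*)\le 1$, and C1 with $i=j$ (using $\Gamma_{\mathrm{req}_j}>0$) gives $\mathbf{W}_j^*\ne\mathbf{0}$, hence $\Rank(\mathbf{W}_j^*)=1$. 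I expect the only real work to be carrying out the stationarity differentiation carefully: one must track that $\mathbf{W}_j$ enters C1 both through the numerator of its own layer (coefficient $-\lambda_j/\Gamma_{\mathrm{req}_j}$) and through the denominators of all lower layers $i<j$ (coefficient $+\sum_{i<j}\lambda_i$), and, for $j\ge 2$, through C2 (coefficient $-\sum_k\beta_k$, the term that must vanish). Once the expressions for $\mathbf{Y}_1$ and $\mathbf{Y}_j$ are established, the remaining rank argument is routine.
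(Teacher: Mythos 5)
Your proof is correct and follows the same KKT-based route as the paper: Slater's condition, the stationarity expressions for $\mathbf{Y}_1^*$ and $\mathbf{Y}_j^*$ (which match the paper's equations exactly), and complementary slackness $\mathbf{Y}_i^*\mathbf{W}_i^*=\mathbf{0}$. The one place you diverge is in extracting the rank conclusion: the paper post-multiplies the stationarity equation by $\mathbf{W}_1^*$, uses rank preservation under multiplication by the positive definite matrix $\mathbf{\Upsilon}+\sum_k\frac{\beta_k^*}{\Gamma_{\mathrm{tol}_k}}\mathbf{G}_k$, and then must separately prove (by a primal perturbation/contradiction argument) that C1 is active so that $\lambda_1^*>0$; you instead bound $\Rank(\mathbf{Y}_1^*)\ge N_\mathrm{T}-1$ by rank subadditivity and read off $\Rank(\mathbf{W}_1^*)\le 1$ from the null-space dimension, which sidesteps the need to show $\lambda_1^*>0$ entirely and is a slightly cleaner finish; the nonzero-ness of $\mathbf{W}_1^*$ from C1 with $\Gamma_{\mathrm{req}_1}>0$ is used identically in both arguments, and the same comparison applies to the second part.
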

\begin{proof}Please refer to the Appendix.
\end{proof}

In the following, two suboptimal
power allocation schemes are proposed based on Proposition 1.

%%%%%%%%%%%%%%%%%%%%%%%%%%%%%%%%%%%%%%%%%%%%%%%%%%%%%%%%%%%%%%%%%%%%%%%%%
\subsubsection{Suboptimal Power Allocation Scheme 1}
It can be concluded from Proposition 1 that when constraint C2  is independent of optimization variable $\mathbf{W}_i$, the solution of the relaxed problem has a rank-one structure for $\mathbf{W}_i,\forall i$. Thus, for facilitating an efficient power allocation algorithm design, we replace constraint  C2 in (\ref{eqn:SDP}) by $\mbox{\textoverline{C2}}$ and the new optimization problem is  as follows:
\begin{eqnarray}\label{eqn:suboptimal1}
&& \hspace*{-8mm}\underset{\mathbf{W}_i,\mathbf{V}\in \mathbb{H}^{N_\mathrm{T}}
}{\mino}\,\, \sum_{i=1}^L \Tr(\mathbf{W}_i)+\Tr(\mathbf{V})\\
\notag \mbox{s.t.} &&\hspace*{10mm}\mbox{C1, C3, C4, C5 }\\
&&\hspace*{-2mm}\mbox{\textoverline{C2}: }\notag\frac{\Tr(\mathbf{G}_k\mathbf{W}_1)}
{\Tr(\mathbf{G}_k\mathbf{V})+\sigma_{\mathrm{s}}^2} \le \Gamma_{\mathrm{tol}_k},\forall k\notag .
\end{eqnarray}
 We note that the new constraint $\mbox{\textoverline{C2}}$ does not take into account the contribution of video signal layer $2,\ldots,L$ at the potential eavesdroppers. In particular,
 \begin{eqnarray}
\frac{\Tr(\mathbf{G}_k\mathbf{W}_1)}
{\sum_{j=2}^L \Tr(\mathbf{G}_k\mathbf{W}_j)+\Tr(\mathbf{G}_k\mathbf{V})+\sigma_{\mathrm{s}}^2}\le \frac{\Tr(\mathbf{G}_k\mathbf{W}_1)}
{\Tr(\mathbf{G}_k\mathbf{V})+\sigma_{\mathrm{s}}^2}
 \end{eqnarray}
 holds and  replacing constraint C2 by $\mbox{\textoverline{C2}}$ results in a smaller feasible solution set for the original problem. Thus,  the obtained solution of problem (\ref{eqn:suboptimal1}) serves as a performance lower bound for the original optimization problem (\ref{eqn:SDP}). Besides, it is expected that the problem formulation in (\ref{eqn:suboptimal1}) requires a higher   artificial noise power compared to (\ref{eqn:SDP}), in order to fulfill constraint $\mbox{\textoverline{C2}}$. We note that (\ref{eqn:suboptimal1}) is a convex optimization problem which can be solved by numerical solvers. On the other hand, we can follow a similar approach as in the Appendix to verify that the sufficient condition in Proposition 1 is always satisfied for problem formulation (\ref{eqn:suboptimal1}) and thus a rank-one solution always results.

\subsubsection{Suboptimal Power  Allocation Scheme 2}
  A hybrid scheme is proposed as suboptimal power  allocation scheme 2.  It computes the solutions for suboptimal scheme 1 and  the SDP relaxation in (\ref{eqn:SDP}) in parallel and selects one of the solutions, cf. Table \ref{table:algorithm}. Specifically, when  the upper bound solution of SDP relaxation is not tight, i.e., $\exists i:\Rank(\mathbf{W}_i)>1$, we adopt the solution given by the proposed suboptimal scheme 1. Otherwise,  the proposed scheme 2 will select the solution given by SDP relaxation since the global optimal is achieved when $\Rank(\mathbf{W}_i^*)=1,\forall i$.
  \begin{Remark}
  We note that  suboptimal power allocation scheme 2 is based on the solution of two convex optimization problems with polynomial time computational complexity.
  \end{Remark}

%%%%%%%%%%%%%%%%%%%%%%%%%%%%%%%%%%%%%%%%%%%%%%%%%%%%%%%%%%%%%%%%%%%%%%%%%
\begin{table}[t]\caption{Suboptimal Power Allocation Scheme}\label{table:algorithm}
\vspace*{-6mm}
\begin{algorithm} [H]           \setcounter{algorithm}{1}          % enter the algorithm environment
\floatname{algorithm}{Suboptimal Power Allocation Scheme}
\caption{}          % give the algorithm a caption
\label{alg1}                           % and a label for \ref{} commands later in the document
\begin{algorithmic} [1]
\normalsize           % enter the algorithmic environment
%\STATE Initialize the maximum number of iterations $L_{max}$ and the
%maximum tolerance $\epsilon$
% \STATE Set maximum energy
%efficiency $q=0$ and iteration index $n=0$
%%%%%%%%%%%%%%%%%%%%
\STATE Solve the relaxed version of  problem (\ref{eqn:SDP})
%%%%%%%%%%%%%%%%%%%%
\IF {the solution of the relaxed version of problem (\ref{eqn:SDP}) is rank-one, i.e., $\Rank(\mathbf{W}_i)=1,\forall i$, } \STATE  $\mbox{Global optimal soultion}=\,$\TRUE \RETURN
 ${\mathbf W}_i$, ${\mathbf  V}=$ solution of the relaxed version of problem (\ref{eqn:SDP})
 \ELSE
\STATE Solve  problem (\ref{eqn:suboptimal1}) and $\mbox{Lower bound solution}=\,$ \TRUE\RETURN
 ${\mathbf W}_i$, ${\mathbf  V}=$ solution of problem (\ref{eqn:suboptimal1})
 \ENDIF
\end{algorithmic}

\end{algorithm}
\vspace*{-9mm}
\end{table}

\section{Results}
\label{sect:result-discussion} In this section, we evaluate the
system performance for the proposed resource allocation schemes using simulations.  We consider a single cell
communication system with $K$ receivers and the corresponding simulation parameters are provided in Table \ref{tab:feedback}.
 The system performance is obtained by averaging over 10000 multipath  fading realizations. Given the system parameters in Table \ref{tab:feedback}, in all considered scenarios,  the minimum secrecy capacity
 of layer $1$ video information is bounded  below by $C_{\mathrm{sec}}=\log_2(1+\Gamma_{\mathrm{req}_1})-\log_2(1+\Gamma_{\mathrm{tol}_k})\ge 2.179$ bit/s/Hz.

\subsection{Average Total Transmit Power versus Number of Potential Eavesdroppers }
Figure \ref{fig:p_SNR} depicts the  average total transmit power versus the
 number of idle receivers (potential eavesdroppers), $K-1$, for $N_{\mathrm{T}}=4$ transmit antennas and different power allocation schemes.
It can be observed that  the average total transmit power for the proposed schemes
 increases with the number of potential eavesdroppers.  In fact, the transmitter has to generate a higher amount of artificial noise
 for providing secure communication when there are more potential eavesdroppers in the system. Besides, the two
  proposed suboptimal power allocation schemes perform close to the performance upper bound  achieved by SDP relaxation.
   In particular,  the performance of  proposed scheme 1  serves as a lower bound (i.e., a higher average transmit power)
    for the two proposed schemes since a smaller feasible
    solution set is considered in (\ref{eqn:suboptimal1}).  On the other hand, the performance of proposed scheme 2 coincides with the
    SDP relaxation upper bound. This is due to the fact that proposed scheme 2 is a hybrid scheme which exploits the possibility
    of achieving the global optimal solution
       via SDP relaxation.

For comparison, Figure
\ref{fig:p_SNR} also contains the average total transmit power of two baseline
power allocation schemes. For baseline scheme 1, we adopt single layer transmission for  delivering the video signal. In particular,
we solve the corresponding optimization problem with respect to $\{\mathbf{W}_1,\mathbf{V}\}$ subject to constraints C1--C5 via
semi-definite relaxation. The minimum required SINR   for decoding the single layer video information  at the desired receiver for baseline scheme 1 is set to $\Gamma_{\mathrm{req}}^{\mathrm{Single}}=2^{\sum_{i=1}^L\log_2(1+\Gamma_{\mathrm{req}_i})}-1$. It is expected that the minimum required secrecy capacity in baseline scheme 1 is higher than that in multilayer transmission. In baseline scheme 2,
layered video transmission is considered. Specifically, we adopt maximum ratio transmission (MRT) \cite{JR:TWC_large_antennas} for delivering  the video information of each layer, i.e., we apply a fixed direction for beamforming vector $\mathbf{w}^{\mathrm{sub}}$, where
$\mathbf{w}^{\mathrm{sub}}$ is the eigenvector corresponding to the maximum eigenvalue of $\mathbf{H}$. Then, we set the
 beamforming matrix
as $u_i\mathbf{W}=u_i \mathbf{w}^{\mathrm{sub}}(\mathbf{w}^{\mathrm{sub}} )^H $, where $u_i$ is a new non-negative scalar optimization variable for adjusting the power of $u_i\mathbf{W}$. Subsequently,
 we adopt the same setup as in (\ref{eqn:SDP}) for optimizing $\{u_i,\mathbf{V}\}$ and obtain a suboptimal rank-one solution  $u_i\mathbf{W}_i$.
\begin{table}[t]\caption{System parameters}\label{tab:feedback} \centering\vspace*{-3mm}
\begin{tabular}{|L|L|}\hline
User distance  & $50 \mbox{ m}$  \\
                                                        \hline
Eavesdroppers distance & $30 \mbox{ m}$  \\
    \hline
Multipath fading distribution & \mbox{Rayleigh  fading }  \\
 \hline
Path loss model & $34.53 + 38 \log_{10}(d) $ (dB), $d$ is the receiver distance in meter  \\
    \hline
    Maximum transmit power per-antenna $P_{\max_n}$  &  $43$ dBm   \\
        \hline

 Number of layers $L$ &  3  \\
       \hline
    Minimum requirement on the SINR of layers  $[\Gamma_{\mathrm{req}_1}\,\,\Gamma_{\mathrm{req}_2}\,\,\Gamma_{\mathrm{req}_3}]$ & $[6\,\, 9 \,\,12]$ dB   \\
       \hline
          Maximum tolerable SINR at eavesdropper $\Gamma_{\mathrm{tol}_k},\forall k$ & $-10$ dB   \\
       \hline
        Gaussian noise power $\sigma_{\mathrm{s}}^2$ &  \mbox{$-95$ dBm}   \\
       \hline
%          Receive antenna gain &  \mbox{$10$ dBi (dB isotropic)}   \\
%       \hline
\end{tabular}\vspace*{-3mm}
\end{table}
\begin{figure}[t]
 \centering \vspace*{-6mm}
\includegraphics[width=3.5 in]{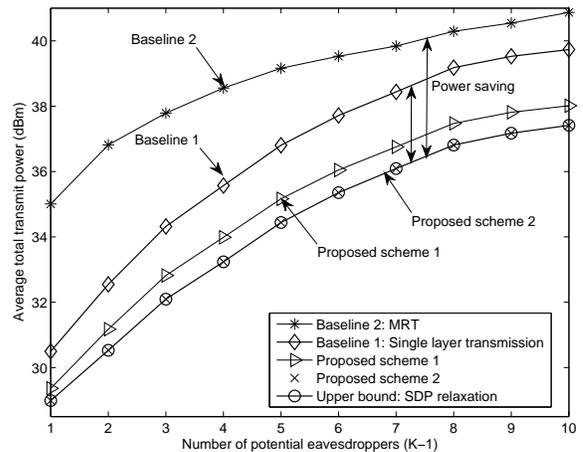}\vspace*{-2mm}
\caption{Average total transmit power (dBm) versus number of idle receivers (potential eavesdroppers), $K-1$, for $N_{\mathrm{T}}=4$ and different power allocation schemes.
The double-sided arrows indicate the power savings achieved by the proposed schemes compared to the baseline schemes.} \label{fig:p_SNR}\vspace*{-3mm}
\end{figure}It can be observed that baseline 1 requires a higher total average power  compared to the proposed power allocation schemes. This is attributed to the fact that single layer transmission for providing secure communication in  baseline 1 does not
 posses the   \emph{self-protecting} structure  that layered transmission has. As a result,
 a higher transmit power for artificial noise generation is required in baseline 1 to ensure secure video delivery. On the other hand,
 although layered transmission is adopted in baseline 2,
 the performance of   baseline 2 is the worst among all the schemes. The reason for this is that the transmitter loses
  degrees of freedom in power allocation
when the structure of video information beamforming matrix, $\mathbf{W}_i,\forall i,$ is fixed to $ \mathbf{w}^{\mathrm{sub}}(\mathbf{w}^{\mathrm{sub}} )^H$ which causes a serious performance degradation.

\begin{figure}[t]
 \centering
\includegraphics[width=3.5 in]{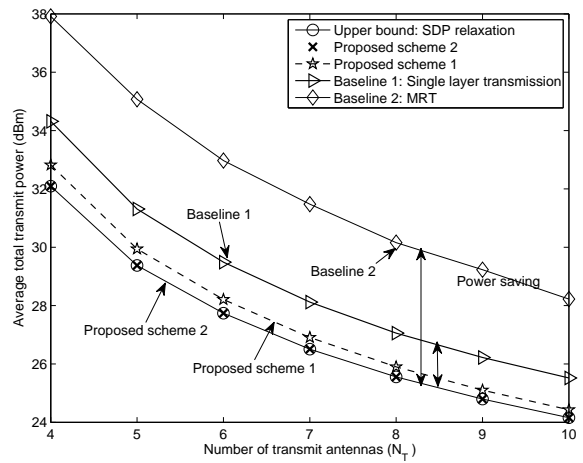}\vspace*{-2mm}
\caption{Average total transmit power (dBm) versus total number of antennas, $N_{\mathrm{T}}$, for $K-1=3$ idle receivers
 (potential eavesdroppers) and different power allocation schemes.
The double-sided arrows indicate the power savings achieved by the proposed schemes compared to the baseline schemes. } \label{fig:pt_nt}\vspace*{-3mm}
\end{figure}
% \begin{figure}[t]
%\centering\vspace*{-4mm}
%\includegraphics[width=3.5 in]{p_eavesdropper_rician.eps}
% \caption{Average total transmit power (dBm) versus
%total number of receivers, $K$, for different resource allocation schemes.} \label{fig:pt_users}\vspace*{-0.2cm}
%\end{figure}

\subsection{Average Total Transmit Power versus Number of Antennas}

Figure \ref{fig:pt_nt} shows the average total transmit power versus the
number of transmit antennas for $K-1=3$ potential eavesdroppers and different power allocation schemes.
 It is expected that the average total transmit power
decreases  with increasing number of antennas for all power allocation schemes. This is because extras degrees of freedom
 can be exploited for power allocation when more antennas are available at the transmitter.
On the other hand,   the proposed schemes  provide  substantial power savings compared to  both baseline schemes for all considered
scenarios due to the adopted layered transmission and the optimization of $\{\mathbf{W}_i,\mathbf{V}\}$.

\section{Conclusions}\label{sect:conclusion}
In this paper,  we focused on  the power allocation
algorithm design for  secure layered video transmission. The algorithm design was
 formulated as a non-convex optimization problem taking into account artificial noise generation to weaken the channel of potential eavesdroppers. Exploiting SDP, a power allocation algorithm was developed to solve the relaxed version of the non-convex optimization problem which resulted in an upper bound solution for minimization of the total transmit power. Subsequently,
 two suboptimal power allocation schemes were designed  by exploiting the structure of the upper bound solution. Simulation results
  unveiled the power savings enabled by layered transmission and the optimization of beamforming and artificial noise generation
    for facilitating secure video transmission. In our future work, we will consider  the impact of imperfect
      channel state information on secure layered transmission systems.

\section*{Appendix - Proof of Proposition 1}
It can be verified that the relaxed version of problem (\ref{eqn:SDP}) is jointly convex with respect to the optimization variables and satisfies Slater's constraint qualification. Thus, the KKT conditions provide the necessary and sufficient conditions \cite{book:convex} for the optimal solution of the relaxed problem. In the following, we study the rank of $\mathbf{W}_i$ by focusing  on the corresponding KKT conditions:
\begin{eqnarray}\label{eqn:KKT}
\hspace*{-3mm}\mathbf{Y}^*_i\hspace*{-3mm}&\succeq&\hspace*{-3mm} \mathbf{0},\quad\beta_k^*,\,\delta_n^*,\,\lambda_i^*,\ge 0, \forall i,n,k,\\
\hspace*{-3mm} \mathbf{Y}^*_i\mathbf{W}^*_i\hspace*{-3mm}&=&\hspace*{-3mm}\mathbf{0},\forall i, \label{eqn:KKT-complementarity}\\
\label{eqn:KKT_Y1}
\hspace*{-3mm}\mathbf{Y}^*_1\hspace*{-3mm}&=&\hspace*{-3mm}\mathbf{\Upsilon}+\hspace*{-0.5mm}  \sum_{k=1}^{K-1} \frac{\mathbf{G}_k}{\Gamma_{\mathrm{tol}_k}} \beta_k^*     -\frac{\lambda^*_1}{\Gamma_{\mathrm{req}_1}}\mathbf{H},\\
\label{eqn:KKT_Yi}
\hspace*{-3mm}\mathbf{Y}^*_i\hspace*{-3mm}&=&\hspace*{-3mm}\mathbf{\Upsilon} -\hspace*{-0.5mm}  \sum_{k=1}^{K-1} \mathbf{G}_k \beta_k^* +\Big( \sum_{j<i} \lambda_j^* \hspace*{-0.5mm}   -\hspace*{-0.5mm}\frac{\lambda^*_i}{\Gamma_{\mathrm{req}_i}}\Big)\mathbf{H}, i>1,
\end{eqnarray}
where $\mathbf{\Upsilon}=\mathbf{I}_{N_\mathrm{T}}+\hspace*{-0.5mm}\sum_{n=1}^{N_{\mathrm{T}}}\delta_n^*\mathbf{\Psi}_n$ and $\mathbf{Y}^*_i,\,\beta_k^*,\,\delta_n^*,\,\lambda_i^*$ are the optimal Lagrange multipliers for (\ref{eqn:dual}). Equation (\ref{eqn:KKT-complementarity}) is the complementary slackness condition.

The proof is divided into two parts. In the first part, we prove that $\Rank(\mathbf{W}_1^*)=1$. To this end, we post-multiply both sides of (\ref{eqn:KKT_Y1}) by $\mathbf{W}^*_1$ and after some manipulations we obtain for
 \begin{eqnarray}\label{eqn:post}
\Big( \mathbf{\Upsilon}\hspace*{-0.5mm} +\hspace*{-0.5mm}  \sum_{k=1}^{K-1} \frac{\mathbf{G}_k}{\Gamma_{\mathrm{tol}_k}} \beta_k^*\Big)\mathbf{W}^*_1 =\frac{\lambda^*_1}{\Gamma_{\mathrm{req}_1}}\mathbf{H}\mathbf{W}^*_1.
 \end{eqnarray}
Since $\mathbf{I}_{N_\mathrm{T}}\hspace*{-0.5mm} +\hspace*{-0.5mm}  \sum_{k=1}^{K-1} \frac{\mathbf{G}_k}{\Gamma_{\mathrm{tol}_k}} \beta_k^* $ is a positive definite matrix,  the following equality holds \cite{book:matrix_analysis}:
 \begin{eqnarray}
&&\Rank\Big(\mathbf{W}^*_1\Big)=\Rank\Big(\big(\mathbf{\Upsilon}\hspace*{-0.5mm} +\hspace*{-0.5mm}  \sum_{k=1}^{K-1} \frac{\mathbf{G}_k}{\Gamma_{\mathrm{tol}_k}} \beta_k^*\big)\mathbf{W}^*_1\Big) \notag\\
=&&\Rank\Big(\frac{\lambda^*_1}{\Gamma_{\mathrm{req}_1}}\mathbf{H}\mathbf{W}^*_1\Big)\notag\\
\label{eqn:rank_option}
=&&\min\Big\{\Rank\Big(\frac{\lambda^*_1}{\Gamma_{\mathrm{req}_1}}\mathbf{H}\Big),\Rank\Big(\mathbf{W}^*_1\Big)\Big\}.
 \end{eqnarray}
We note that $\Rank(\mathbf{H})=1$ and $\Rank(\frac{\lambda^*_1}{\Gamma_{\mathrm{req}_1}}\mathbf{H})$ is either zero or one. On the other hand,
$\mathbf{W}^*_1\ne\mathbf{0}$ is required to satisfy the minimum SINR requirement of the desired receiver  in C1 of (\ref{eqn:SDP}) when $\Gamma_{\mathrm{req}_1}>0$. As a result,  we need to  prove $\lambda^*_1>0$ in order to show $\Rank(\mathbf{W}^*_1)=1$. In other words, constraint C1 has to be satisfied with equality for $i=1$. In the following, we prove by contradiction that constraint C1 is indeed  satisfied with equality $\forall i$, i.e.,
\begin{eqnarray}
\frac{\Tr(\mathbf{H}\mathbf{W}_i)}{\sum_{j=i+1}^L \Tr(\mathbf{H}\mathbf{W}_j)+\Tr(\mathbf{H}
\mathbf{V})+\sigma_{\mathrm{s}}^2} = \Gamma_{\mathrm{req}_i},\forall i.
\end{eqnarray}
 Without loss of generality, we denote $({\mathbf{W}}^*_i,{\mathbf{V}}^*)$ as the optimal solution.  Suppose that for layer $a\in\{1,\ldots,L\}$, $\mbox{C1}$ is
 satisfied with strict inequality, i.e., ``$>$", at the optimal solution. Then, we construct a new feasible solution $\mathbf{\tilde W}_i=\mathbf{W}^*_i$ except in layer $a$: $\mathbf{\tilde W}_a= \alpha\mathbf{W}_a^*$.  $0<\alpha<1$ is
  imposed in the new solution such that the considered constraint is satisfied with equality. It can be verified that the new solution ($\mathbf{\tilde W}_i,{\mathbf{V}}^*)$ achieves a lower objective value in (\ref{eqn:SDP}) than $({\mathbf{W}}^*_i,{\mathbf{V}}^*)$. Thus, $({\mathbf{W}}^*_i,{\mathbf{V}}^*)$ cannot be the optimal solution. As a result, $\lambda^*_i>0,\forall i$, holds for the optimal solution. By combining (\ref{eqn:rank_option}) and $\lambda_1^*>0$, $\Rank(\mathbf{W}^*_1)=1$ holds true at the optimal solution.

For the second part of the proof, we show that  $\Rank(\mathbf{W}^*_j)=1, j\in\{2,\ldots,L\}$, under the assumption of $\beta_k^*=0,\forall k$. We only provide a sketch of the proof due to page limitation.  From (\ref{eqn:KKT_Yi}), we obtain for $\beta^*_k=0,\forall k,$
 \begin{eqnarray}\label{eqn:general}
\Big(\mathbf{\Upsilon}\hspace*{-0.5mm} -\hspace*{-0.5mm}  \sum_{k=1}^{K-1} \mathbf{G}_k \beta_k^* \Big)\mathbf{W}^*_i=\mathbf{\Upsilon}\mathbf{W}^*_i=\Big( \frac{\lambda^*_i}{\Gamma_{\mathrm{req}_i}}\hspace*{-0.5mm} -\hspace*{-0.5mm}\sum_{j<i} \lambda_j^* \Big)\mathbf{H}\mathbf{W}^*_i
 \end{eqnarray}
It can be seen that (\ref{eqn:general}) has a similar structure as (\ref{eqn:post}). Thus,  we can use a similar approach as in the first part
of the proof to show that $\Rank(\mathbf{W}^*_j)=1, j\in\{2,\ldots,L\}$ if $\beta_k^*=0, \forall k$.

\bibliographystyle{IEEEtran}
\bibliography{OFDMA-AF}

\end{document}